%
%
%
%

\documentclass[prodmode,acmtecs]{acmsmall} 

\usepackage[ruled]{algorithm2e}

\SetAlFnt{\small}
\SetAlCapFnt{\small}
\SetAlCapNameFnt{\small}
\SetAlCapHSkip{0pt}
\IncMargin{-\parindent}

\usepackage{amsmath}
\usepackage[table,xcdraw]{xcolor}
\usepackage{array}
\newcolumntype{x}[1]{>{\centering\let\newline\\\arraybackslash\hspace{0pt}}m{#1}}
\usepackage{subfigure}
\usepackage{makecell}
\usepackage{mathtools}

\newcommand{\ignore}[1]{}

\acmVolume{9}
\acmNumber{4}
\acmArticle{39}
\acmYear{2010}
\acmMonth{3}


\doi{0000001.0000001}

\issn{1234-56789}

\begin{document}

\markboth{H. Jung et al.}{Multiprocessor Scheduling of a Multi-mode Dataflow Graph Considering Mode Transition Delay}

\title{Multiprocessor Scheduling of a Multi-mode Dataflow Graph Considering Mode Transition Delay}
\author{HANWOONG JUNG
\affil{Seoul National University}
HYUNOK OH
\affil{Hanyang University}
SOONHOI HA
\affil{Seoul National University}}

\begin{abstract}
Synchronous Data Flow (SDF) model is widely used for specifying signal processing or streaming applications. Since modern embedded applications become more complex with dynamic behavior changes at run-time, several extensions of the SDF model have been proposed to specify the dynamic behavior changes while preserving static analyzability of the SDF model. They assume that an application has a finite number of behaviors (or modes) and each behavior (mode) is represented by an SDF graph. They are classified as multi-mode dataflow models in this paper. While there exist several scheduling techniques for multi-mode dataflow models, no one allows task migration between modes. By observing that the resource requirement can be additionally reduced if task migration is allowed, we propose a multiprocessor scheduling technique of a multi-mode dataflow graph considering task migration between modes. Based on a genetic algorithm, the proposed technique schedules all SDF graphs in all modes simultaneously to minimize the resource requirement. To satisfy the throughput constraint, the proposed technique calculates the actual throughput requirement of each mode and the output buffer size for tolerating throughput jitter. We compare the proposed technique with a method which analyzes SDF graphs in each execution mode separately and a method that does not allow task migration for synthetic examples and three real applications: H.264 decoder, vocoder, and LTE receiver algorithms.
\end{abstract}

%
%
\begin{CCSXML}
<ccs2012>
  <concept>
    <concept_id>10003752.10003753.10003760</concept_id>
    <concept_desc>Theory of computation~Streaming models</concept_desc>
    <concept_significance>300</concept_significance>
  </concept>
  <concept>
    <concept_id>10010520.10010553.10010562.10010564</concept_id>
    <concept_desc>Computer systems organization~Embedded software</concept_desc>
    <concept_significance>300</concept_significance>
  </concept>
</ccs2012>
\end{CCSXML}

\ccsdesc[300]{Theory of computation~Streaming models}
\ccsdesc[300]{Computer systems organization~Embedded software}

%
%


\keywords{Synchronous dataflow, Multi-mode dataflow, Mode transition delay, Task migration, Throughput requirement}

\acmformat{Hanwoong Jung, Hyunok Oh, and Soonhoi Ha, 2016. Multiprocessor Scheduling of a Multi-mode Dataflow Graph Considering Mode Transition Delay.}

\begin{bottomstuff}
This research was supported by a grant to Bio-Mimetic Robot Research Center Funded by Defense Acquisition Program Administration, and by Agency for Defense Development (UD130070ID), Basic Science Research Program through the National Research Foundation of Korea (NRF) funded by the Ministry of Science, ICT \& Future Planning (NRF-2013R1A2A2A01067907, 2013R1A1A1013384), and IT R\&D program MKE/KEIT (No. 10041608, Embedded system Software for New-memory based Smart Device).

Author's addresses: H. Jung {and} S. Ha, Department of Computer Science and Engineering,
Seoul National University; H. Oh, Department of Information System, Hanyang University; 
\end{bottomstuff}

\maketitle

\section{Introduction}
\label{Section:Introduction}

Model-based design methodology is widely accepted for embedded system design since it enables us to cope with ever increasing system complexity by maximizing the benefit of abstraction. As an algorithm specification model, this paper adopts a coarse-grain dataflow model which is suitable for specifying signal processing or streaming applications. In a dataflow graph, a node presents a function module and an arc represents the flow of data samples (or tokens) through the FIFO channel between two end nodes. When a node is invoked, it consumes a specified number of data samples (called sample rate) from each input arc and produces a specified number of samples to each output arc. A node becomes executable when all input arcs have as many data samples as the specified sample rates. If the sample rate is a fixed integer number which does not change at run-time, the dataflow graph is called a synchronous dataflow (SDF) graph \cite{Lee:1987}.

For a given multiprocessor system, we need to determine the mapping of nodes to the processors and the execution order of mapped nodes on each processor. The static sample rate in the SDF model allows us to make the mapping and scheduling decision statically. From the static mapping and scheduling result for an SDF graph on a multiprocessor, we can estimate the performance and the resource requirement, which is very desirable for the design of embedded systems with tight real-time and resource constraints. If the implemented system follows the pre-determined mapping and execution order of nodes at run-time, the system can be claimed to be ``\textit{correct by construction}''.

But the SDF model has a severe restriction to be used for modern embedded applications. It cannot express the dynamic behavior of an application, while modern embedded applications become more complex with dynamic behavior changes at run-time. For example, advanced video CODEC algorithms have several function modules that are conditionally invoked depending on the contents of the input frame. In addition, an application may have multiple implementations of the same algorithm to support various levels of quality of service.

To express such dynamic behavior of an application in the SDF model with keeping the static analysis capability, several extensions have been proposed to the SDF model, including FSM-based scenario-aware dataflow (FSM-SADF) \cite{Stuijk:2011}, parameterized SDF (PSDF) \cite{Bhattacharya:2001}, MTM-SADF \cite{Jung:2014}, mode-aware dataflow (MADF) \cite{Zhai:2015}, and so on \cite{Girault:1999} \cite{Wiggers:2008}. They all assume that an application has a finite number of behaviors (or modes) and each behavior (mode) can be represented by an SDF graph. We denote those MoCs (Models of Computation) as multi-mode dataflow (MMDF) graphs in this paper and define a representative MMDF model that can be implemented by any specific extension.

Because an MMDF graph is composed of SDF graphs in multiple modes, the static schedule of the SDF graph on each mode can be constructed to estimate the overall performance and the resource requirement of the MMDF graph. Also, for more accurate estimation, the mode transition delay during mode changes should be considered. If the mode change occurs frequently and periodically, the mode transition delay will seriously degrade the overall performance of the MMDF graph. While there exist several techniques to schedule a multi-mode dataflow graph with considering the mode transition delay, the existing approaches do not allow task migration between modes. However, we observe that the resource requirement for a given throughput constraint can be additionally reduced if task migration is allowed. Since task migration will cause additional run-time overhead during mode transitions, we should take into account the effect task migration overhead on the throughput performance.

In this paper, we propose a multiprocessor scheduling technique of a multi-mode dataflow graph considering the mode transition delay conservatively. Our scheduling objective is to minimize the number of processors allowing task migration among all modes, while satisfying the overall throughput constraint. Also, from the scheduling result, we compute the output buffer size to tolerate the time fluctuation of output results. Experiment results show that the proposed scheduling approach provides better solutions than the existing approaches which determine the mapping and the scheduling for graphs either each mode independently incurring task migration between modes, or all modes at the same time disallowing task migration.

The rest of this paper is organized as follows. The next section gives a motivational example to clarify the problem addressed in this paper, and introduces the key idea of the proposed technique. Section \ref{Section:Related Work} reviews the related work. The problem addressed in this paper will be defined and formulated in Section \ref{Section:Problem Definition}. In Section \ref{Section:Throughput Requirement Analysis} and \ref{Section:Proposed MMDF Scheduling Framework}, the throughput requirement analysis technique and the proposed scheduling technique considering the mode transition delay are explained in detail, respectively. In Section \ref{Section:Experimental Result}, we discuss our experimental results, and draw conclusions in Section \ref{Section:Conclusion}.

\section{Motivational Example}
\label{Section:Motivational Example}

\subsection{Throughput Requirement Calculation Considering Mode Transition Delay}
\label{SubSection:Mode Transition Delay}

\begin{figure}[ht]
\centering
  \begin{subfigure}{}
    \includegraphics[width=0.8\columnwidth,keepaspectratio]{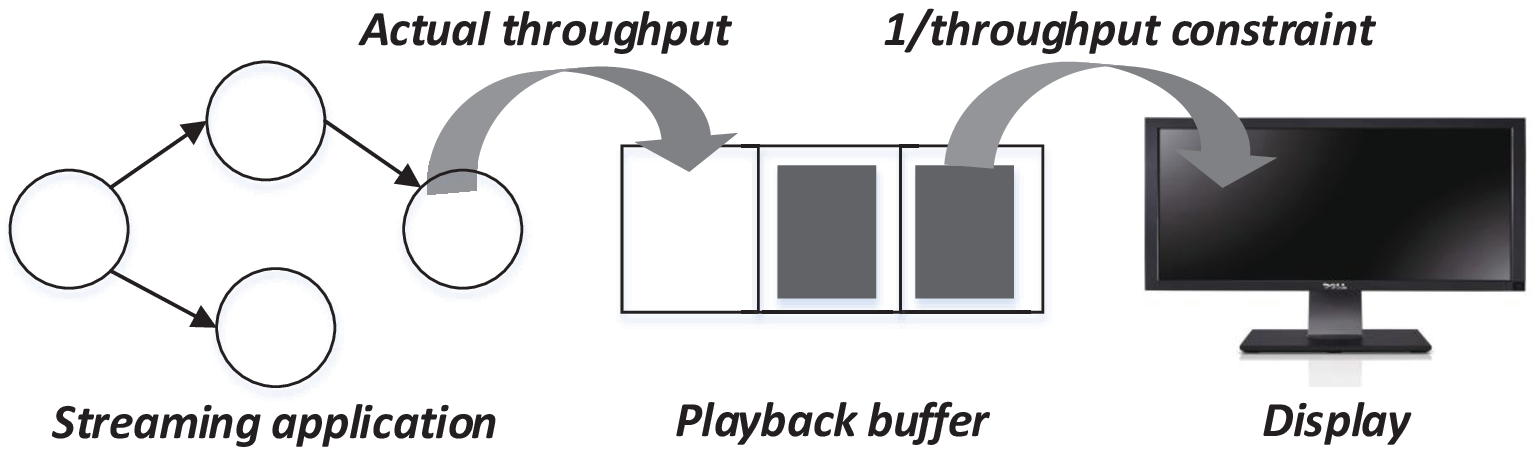}
    \\
    \subcaption{(a) Relation between the playback buffer and throughput}
  \end{subfigure}
  \hfill
  \\
  \begin{subfigure}{}
    \includegraphics[width=0.9\columnwidth,keepaspectratio]{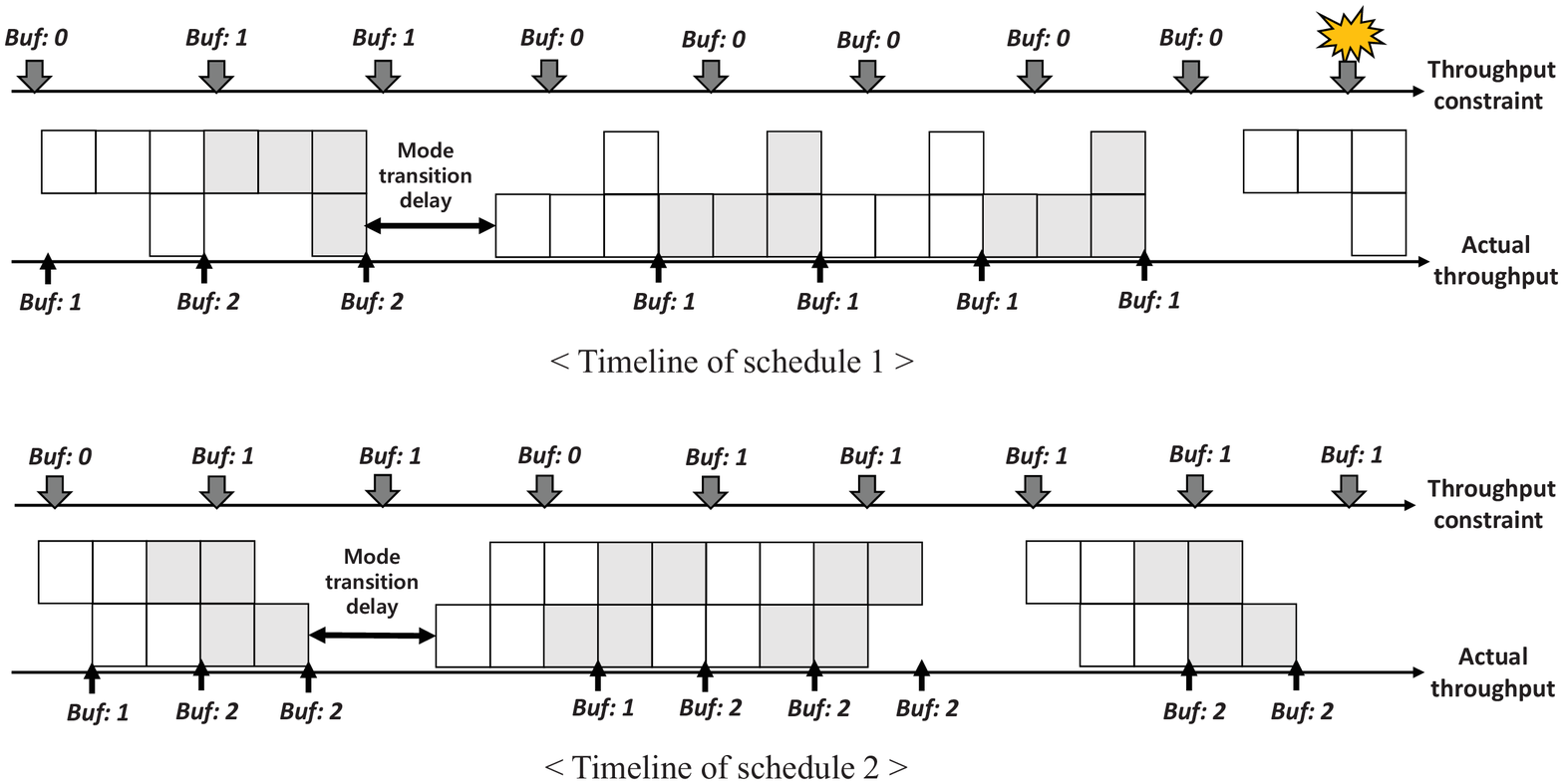}
    \\
    \subcaption{(b) Throughput requirement considering the mode transition delay}
  \end{subfigure}
  \caption{Motivational example of throughput requirement calculation considering the mode transition delay}
  \label{Figure:Motivational Example of Mode transition}
\end{figure}

To cope with fluctuation of output intervals of streaming applications, an output buffer is usually used to obtain the steady output stream in streaming applications as depicted in Figure \ref{Figure:Motivational Example of Mode transition} (a). The system will require data from the output buffer periodically and the period is defined by the inverse of the throughput constraint. If the throughput performance of an application is lower than the throughput constraint, the output buffer will be eventually empty.

If a streaming application is specified in an MMDF graph, it can be scheduled at compile-time to meet the given throughput constraint. To guarantee the throughput constraint, not only throughput performance of each mode, but also the mode transition delay should be considered. Although a mapping/scheduling result of each SDF graph keeps the throughput constraint, the average throughput performance can be lower than the constraint because of additional time delay during mode transition.

Various factors have an effect on the mode transition delay. In \cite{Stuijk:2010} and \cite{Geilen:2012}, the system reconfiguration overhead and DVFS delay are considered as the mode transition delay. Also, \cite{Zhai:2015} defines the mode transition delay to quantify the proposed transition protocols. Especially, it proposes a MOO (Maximum-Overlap Offset) transition protocol which calculates an offset that guarantees no interference between the execution of both old and new modes. When a mode transition occurs, an application will be delayed until this offset, which should be considered in the throughput calculation. In addition to such rescheduling delay, we need to consider the task migration overhead in the computation of mode transition delay.

Since the mode transition delay degrades the average throughput of an MMDF application, the throughput constraint can be violated even if the throughput of each mode is higher than the throughput constraint. Figure \ref{Figure:Motivational Example of Mode transition} (b) shows an example of two different schedules for an MMDF graph. The MMDF graph consists of two different modes, and there exists additional time delay during mode transition. As represented by arrows on an upper line, the system dequeues data from the output buffer periodically with the same rate as the throughput constraint. The arrows on a lower line tell when an MMDF application enqueues data to the output buffer. A number annotated on each arrow denotes the number of data in the output buffer after the access is completed.

In case of schedule 1 in Figure \ref{Figure:Motivational Example of Mode transition} (b), even though the schedule of each mode satisfies the throughput constraints, the throughput constraint is eventually violated since the mode transition delay is accumulated. To avoid this problem, we need to set the throughput constraint of each mode tighter than that of the application as schedule 2 illustrates in the figure; it keeps the throughput constraint because it fills the output buffer faster than the throughput constraint. Therefore we need to calculate the actual throughput requirement for each mode considering the mode transition delay, in order not to violate the given constraint. Details will be discussed in Section \ref{Section:Throughput Requirement Analysis}.

\subsection{Task Migration Between Mode Transition}
\label{SubSection:Task migration}

\begin{figure} [ht]
\centerline{\includegraphics[width=0.95\textwidth]{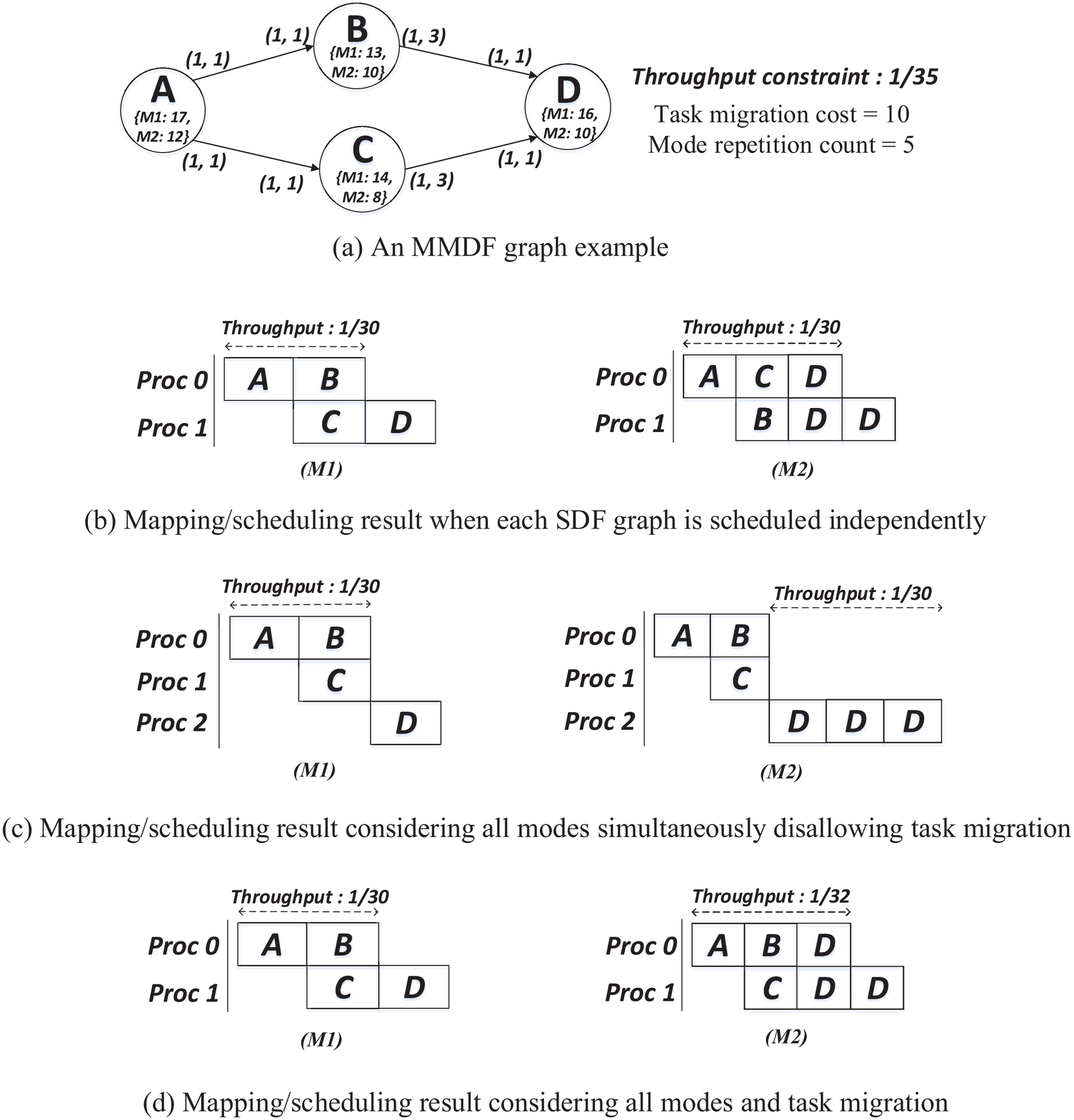}}
\caption{Motivational Example of Task Migration}
\label{Figure:Motivational Example of Task Migration}
\end{figure}

Figure \ref{Figure:Motivational Example of Task Migration} (a) shows an MMDF graph example which consists of two modes: \textit{M1} and \textit{M2}. We assume that the throughput constraint of the MMDF graph is given as 1/35. Each execution mode of an MMDF graph is represented with an SDF graph. The execution time and sample rates of each node may vary depending on the execution mode. In mode \textit{M1}, the execution times of nodes \textit{A, B, C,} and \textit{D} are 17, 13, 14, and 16, respectively and in mode \textit{M2}, they are 12, 10, 8, and 10. The output sample rates of nodes \textit{B} and \textit{C} are unity in mode \textit{M1} while they are 3 in mode \textit{M2}. Refer to Section \ref{Section:Problem Definition} for the formal description of the MMDF model assumed in this paper. Also, in this section, we only consider the task migration overhead as the mode transition delay to simply show the effect of task migration during mode transition.

 A naive approach to schedule an MMDF graph is to schedule an SDF graph in each mode independently with multiple objectives of resource minimization and throughput maximization. For example, for the given throughput constraint, we find an optimal mapping/scheduling result in each execution mode as shown in Figure \ref{Figure:Motivational Example of Task Migration} (b). Since it does not consider mapping results in the other modes, a node may be mapped onto different processors between modes. Therefore, the mapping result requires task migration when the mode changes. In Figure \ref{Figure:Motivational Example of Task Migration} (b), nodes \textit{B}, \textit{C} and \textit{D} will be migrated to other processors when the mode transition occurs.

Another approach to schedule an MMDF graph is to consider all modes simultaneously disallowing task migration \cite{Stuijk:2008} \cite{Geilen:2010}. Since the mapping is constrained in these approaches, the scheduling results generally require more processors than those that allow task migration. For instance, three processors are required to meet the given throughput constraint for the mapping/scheduling result without task migration as shown in Figure \ref{Figure:Motivational Example of Task Migration} (c), while two processors are enough for the scheduling result with task migration in Figure \ref{Figure:Motivational Example of Task Migration} (b). Since the objective of this paper is to minimize the resource requirement under a given throughput constraint, the proposed approach allows task migration. Their approach is used as a reference technique for comparison with the proposed technique in experiments.

Consider the former approach that allows task migration in Figure \ref{Figure:Motivational Example of Task Migration} (b). If the mode transition occurs frequently and the task migration overhead is non-negligible, then the given throughput constraint may not be satisfied. For instance, assume that the mode transition occurs every 5 iterations and the task migration overhead of each node is 10. In Figure \ref{Figure:Motivational Example of Task Migration} (b), 30 time unit is added every 5 iterations because nodes \textit{B}, \textit{C} and \textit{D} should be migrated for mode transition. Then, the output buffer will be eventually empty, because the average throughput performance of the MMDF graph becomes lower then the throughput constraint.

Therefore, in this paper, we propose another approach that schedules the SDF graphs of all modes simultaneously allowing task migration among execution modes. Figure \ref{Figure:Motivational Example of Task Migration} (d) shows a mapping and scheduling result produced by the proposed technique. It requires 2 processors and only 10 additional time units for task migration, which may satisfy the throughput requirement with proper output buffering. Throughput analysis considering task migration overhead will be discussed in Section \ref{Section:Throughput Requirement Analysis}.

\section{Related Work}
\label{Section:Related Work}

As the related work to the proposed technique, we review several extensions of the SDF model that have been proposed to express the dynamic behavior of an application.

One of the most representative multi-mode dataflow models is FSM-based SADF (Scenario-Aware Data Flow) model \cite{Stuijk:2008}, shortly FSM-SADF. In the FSM-SADF model, an application consists of multiple scenarios (modes) and each scenario is specified by an SDF graph. To specify multiple scenarios and their transitions, it defines a special control task called detector that has an FSM inside. The detector task sends the control information to the normal computation tasks that may change its behavior. For the FSM-SADF model, several techniques to statically analyze the timing behavior such as worst case latency and throughput \cite{Geilen:2010} have been proposed. Also, a binding-aware scenario graph \cite{Stuijk:2010} has been proposed to take into account the resource constraint. And, in \cite{Damavandpeyma:2013}, it considers reconfiguration overhead for DVFS (Dynamic Voltage Frequency Scaling) as the mode transition delay. However, it only considers the worst-case performance analysis of the FSM-SADF graph for the given task mapping, and requires inherently exponential time-complexity for exact analysis.

As a similar model to the FSM-SADF, an MTM-SADF \cite{Jung:2014} has been proposed to specify application level dynamism based on an SDF. Instead of an FSM, it uses a Mode Transition Machine (MTM) which is a simplified form of the FSM to represent the mode transition. It proposes a hybrid task mapping technique with minimizing the overall energy consumption under the throughput constraints. However, it analyzes each SDF graph independently.

PSDF (Parameterized Synchronous Data Flow) \cite{Bhattacharya:2001} proposes a meta-modeling technique for run-time adaptation of parameters in a structured way. In the PSDF model, the dynamic behavior of a task is modeled by parameters and the task behavior can change at the iteration boundary at run-time. Since the PSDF becomes an SDF graph at each iteration, the PSDF can be regarded as a multi-mode dataflow graph that may change modes every iteration.

MCDF (Mode-Controlled Data Flow) \cite{Moreira:2012} is one of data flow MoCs which enables the expression of the data-dependent functional behavior. However, it mainly focuses on SDR (Software-Defined Radio) applications, where different sub-graphs need to be active in different modes.

In VRDF (Variable-Rate Data Flow) \cite{Wiggers:2008} model, it allows variable port rates within a specified range, and VPDF (Variable-rate Phase Data Flow) \cite{Wiggers:2008} is proposed to combine characteristics of VRDF and CSDF where each actor has a sequence of phases, and for every phase, the number of firings can be parameterized. For these MoCs, buffer size analysis technique is proposed to satisfy a throughput constraint.

MADF (Mode-Aware Data Flow) \cite{Zhai:2015} has been proposed to support hard real-time scheduling for multi-mode CSDF (Cyclo-Static Data Flow) model \cite{Bilsen:1995}. It combines advantages of SADF and VPDF to specify application level dynamism. Also, it proposes MOO (Maximum-Overlap Offset) mode transition protocol to derive an efficient analysis for a hard real-time scheduling of an MADF graph. With this mode transition protocol, the timing behavior of individual modes and during mode transitions can be analyzed independently.

BPDF (Boolean Parametric Data Flow) \cite{Bebelis:2013} supports change of port rates and graph topology at run-time using integer and boolean parameters. In BPDF model, integer parameters are used to change port rates at each iteration, and boolean parameters are used for activation and deactivation of edges to change graph topology.

HDF (Heterogeneous Data Flow) (or *-chart) \cite{Girault:1999} supports multi-mode applications through an FSM that executes an iteration of an SDF graph in each state. So, an application is specified with a set of different SDF graphs combined with an FSM.

While various analysis and scheduling techniques have been proposed for those MoCs, no one considers task migration between modes. In \cite{Lee:2013}, task migration is considered in the failure-aware task scheduling technique where an SDF graph is scheduled multiple times with different number of processors allocated, aiming to maximize the throughput with the allocated number of processors. When a processor fails in the middle of execution, it changes the schedule that uses the reduced number of processors by one. Then task migration occurs between two different schedules before and after processor failure. They try to minimize the migration cost between two SDF schedules. This method is similar to the base method that will be used for comparison in this paper: schedule each mode separately and find the best processor-to-processor mapping (or processor renaming) in order to minimize the migration cost.

In summary, to the best of our knowledge, this paper is the first work which proposes a multiprocessor scheduling technique of an MMDF graph allowing task migration between modes, and analyzes the throughput requirement considering the mode transition delay.

\section{Problem Definition}
\label{Section:Problem Definition}

\begin{figure} [ht]
\centerline{\includegraphics[width=0.7\textwidth]{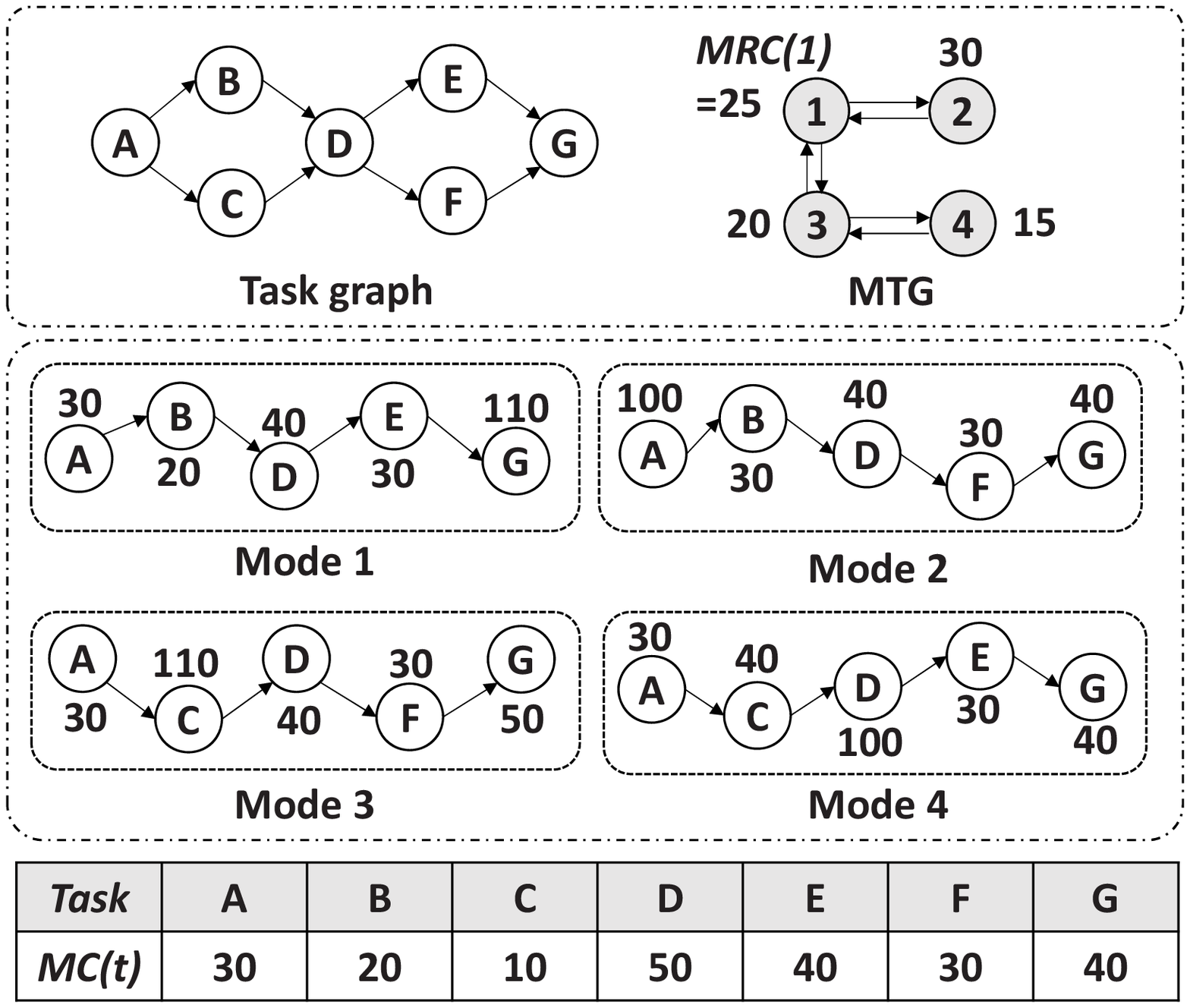}}
\caption{An MMDF graph example}
\label{Figure:MMDF Graph Example}
\end{figure}

The MMDF model assumed in this paper is not a specific model but a generic model encompassing existing similar models such as FSM-SADF \cite{Stuijk:2008} and MTM-SADF \cite{Jung:2014}. In those models, mode transition is specified by an FSM and all modes are integrated into a single graph with varying configuration parameters. Figure \ref{Figure:MMDF Graph Example} shows an MMDF graph example. We first define the MMDF model and the problem formally.

\vspace{0.3cm}

\textbf{\underline{Application model}}: An MMDF graph is specified by a combination of a task graph and a mode transition graph ($MTG$), or $(T, C, D) \times MTG$, where

\begin{itemize}
  \item $MTG$ is specified by a tuple ($Mode$, $Trans$) where $Mode$ is a finite set of modes and $Trans$ is a finite set of transitions. $Trans$ is specified as follows: $Trans =\lbrace (cm,nm) | cm \in Mode, nm \in Mode \rbrace$ where $cm$ denotes a current mode and $nm$ denotes a next mode.
  \vspace{0.2cm}
  \item $T$ is a finite set of computational tasks. Each task $t \in T$ has a set of ports $P_t$ to send/receive data to/from other adjacent tasks. $P_t = IP_t \cup OP_t$ where $IP_t$ is a set of input ports and $OP_t$ is a set of output ports. For each port $p \in P_t$, it is assigned a fixed rate, $Rate(p, mode)$, in each execution mode. Thus the graph becomes an SDF graph for each mode.
  \vspace{0.2cm}
  \item $C$ is a finite set of FIFO channels. A channel defines a one-to-one connection between two end ports.
  \vspace{0.2cm}
  \item $D$ is a set of the number of initial tokens in all channels. $d_c (m) \in \lbrace 0 \rbrace \cup \mathbb{N}$ for $\forall d_c \in D$ is the number of initially stored tokens in the channel $c$ in mode $m$.
\end{itemize}

\textbf{\underline{Architecture model}}: the target architecture consists of a set of processing elements.

\begin{itemize}
  \item $PE$ is a set of processing elements. For each $p \in PE$ and $m \in Mode$, $Map(m, p) = \lbrace t | t \in T$ where $t$ is mapped onto a processor $p$ in mode $m \rbrace$
\end{itemize}

Note that even though the proposed technique is applicable to heterogeneous multiprocessor systems, this paper assumes a homogeneous multiprocessor system for simple explanation and implementation.

To analyze the scheduling performance of an MMDF graph, we assume profiling information is available as follows:

\vspace{0.3cm}

\textbf{\underline{Profiling information}}

\begin{itemize}
  \item Worst case execution time ($WCET$) for each task $t \in T$ and $m \in Mode$ is given as $WCET(t_m, p)$ for each processing element $p \in PE$ of the target architecture. In Figure \ref{Figure:MMDF Graph Example}, the $WCET$ of a node is annotated in each mode. For example, $WCET$ of node \textit{A} is 30 in modes 1, 3, and 4, and 100 in mode 2.
  \vspace{0.2cm}
  \item For each $m \in Mode$, we are given a minimum number of iterations that the application stays at the mode, which is denoted by $MRC(m)$ where $MRC$ stands for the minimum repetition count. As $MRC$ becomes smaller, the mode transition occurs more frequently. A mode is associated with an $MRC$ value as shown in Figure \ref{Figure:MMDF Graph Example} where $MRC$ is 25, 30, 20 and 15 in modes 1, 2, 3, and 4, respectively.
  \vspace{0.2cm}
  \item For each $t \in T$, task migration cost is given by $MC(t)$. If the system is a distributed memory system, the migration cost will include the time overhead of moving the code and the context of a task between two processors. If it is a shared memory system, the migration cost will be small as cold miss penalty for task execution. A table in Figure \ref{Figure:MMDF Graph Example} shows the $MC$ value of each task. For instance, $MC$ of node \textit{A} is 30. 
\end{itemize}

We assume that the mode transition of an MMDF graph occurs at the iteration boundary of the SDF schedule associated with the current mode ($cm$). For each task $t \in T$, after the mode transition to the next mode ($nm$), $Rate(p,cm)$ of all $P_t$ and $WCET(t_{cm})$ changes to $Rate(p,nm)$ and $WCET(t_{nm})$. And the SDF schedule associated with the next mode is followed.

With those application and architecture models and profiling information, the problem addressed in this paper is summarized as follows:

\vspace{0.3cm}

\textbf{\underline{PROBLEM}}: Find a mapping and scheduling result of an MMDF graph which satisfies the given throughput constraint

\vspace{0.2cm}

\textbf{\textit{\underline{minimize.}}} the number of required processors

\vspace{0.2cm}

\textbf{\textit{\underline{subject to.}}} the overall throughput performance of the MMDF graph should be higher than the given throughput constraint.

\vspace{0.3cm}

The proposed MMDF scheduling framework is based on a genetic algorithm. So, it needs to evaluate all candidate solutions in every iterations. How to evaluate whether a given mapping and scheduling result of an MMDF graph satisfies the given throughput constraint will be explained in the next section.

\section{Throughput Requirement Analysis}
\label{Section:Throughput Requirement Analysis}

\begin{figure} [ht]
\centerline{\includegraphics[width=1\textwidth]{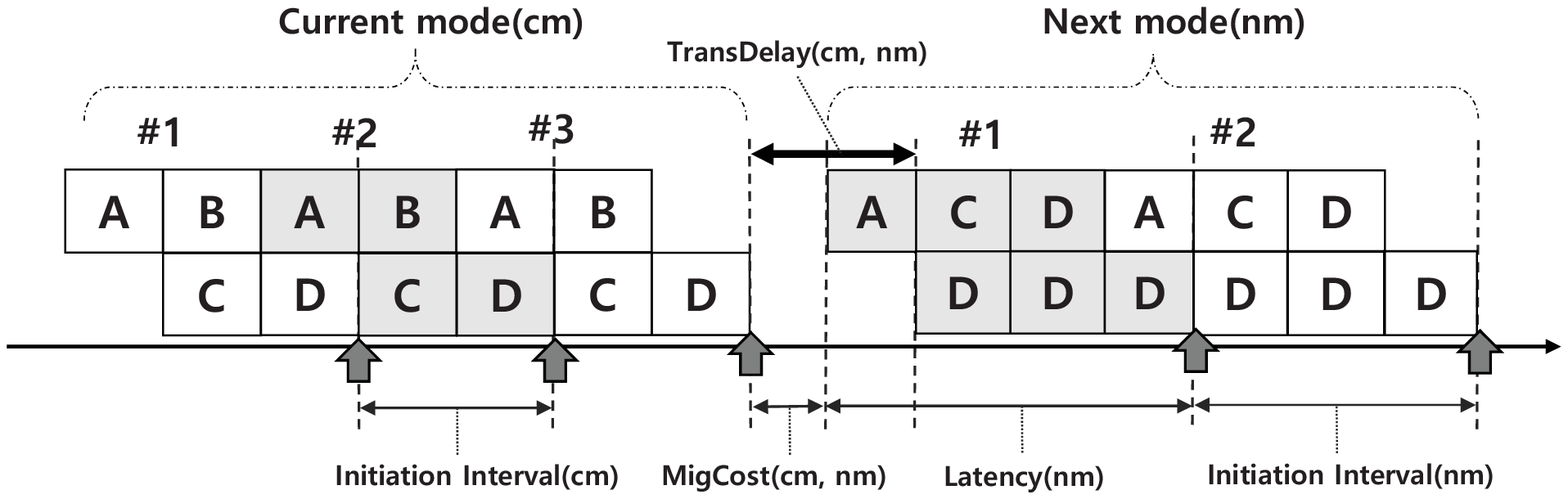}}
\caption{Mode transition delay}
\label{Figure:Mode Transition Delay}
\end{figure}

In this section, we explain how to compute the throughput of an MMDF graph considering the mode transition delay with a given static scheduling results of modes. For simple coordination of task migration and conservative estimation of the mode transition overhead, we assume that mode transition and task migration is performed in a \textit{blocking} fashion. In the \textit{blocking} scheme, the task schedule is \textit{blocked} at the mode transition boundary. Task migration is initiated after all tasks in the current mode finish and the next mode starts after task migration is completed. During mode transitions, no task is executed. Note that the throughput may be degraded due to the blocking even if there is no task migration when a mode transition occurs.

Figure \ref{Figure:Mode Transition Delay} shows a mode transition scenario. It is assumed that a mode changes after the end of iteration 3 in the current mode $cm$. Task $A$ at iteration 1 in next mode $nm$ can start after task $D$ is completed at iteration 3 in mode $cm$ and task migration is completed although a processor is available to execute task $A$ earlier. In the \textit{blocking} scheme of task migration, the mode transition delay consists of two terms: task migration delay and block scheduling effect.

\begin{definition}[Mode Transition Delay]
\label{Definition:Mode Transition Delay}
\begin{equation}
\begin{split}
\forall (cm,nm) \in Trans,\:\: TransDelay(cm, nm) = MigCost(cm, nm) + Latency(nm) \\ - InitiationInterval(nm) \nonumber
\end{split}
\end{equation}
\end{definition}

where $MigCost(cm, nm)$ represents the task migration delay between current mode $cm$ and  next mode $nm$, $Latency(nm)$ denotes the latency between the earliest start time of a task and the latest finish time of a task in mode $nm$, and $InitiationInterval(nm)$ is the start time interval between two consecutive iterations in mode $nm$. Note that $InitiationInterval(nm)$ is equal to the time interval between two consecutive output samples in a streaming application without mode transition. The inverse of the initiation interval denotes the throughput performance if no mode transition occurs.

\subsection{Buffer Size Determination}
\label{SubSection:Buffer Size Determination}

As discussed in Section \ref{SubSection:Mode Transition Delay}, an output buffer is adopted to produce data samples periodically. Since the mode transition delay causes the jitter of output production in an MMDF application, the output buffer should be large enough to provide the data samples during mode transitions. The required output buffer size depends on the maximum mode transition delay and the throughput difference between the input stream and the output stream in the buffer.

To determine the buffer size, we compute the arrival curves of the input and the output streams in the buffer. The arrival curve of a stream informs the number of arriving (or departing) samples (y-axis) within a time interval (x-axis) as shown in Figure \ref{Figure:Arribal Curves} \cite{Thiele:2000}. For conservative estimation, we utilize the maximum arrival curve for the output stream and the minimum arrival curve for the input stream.

\begin{figure} [ht]
\centerline{\includegraphics[width=0.95\textwidth]{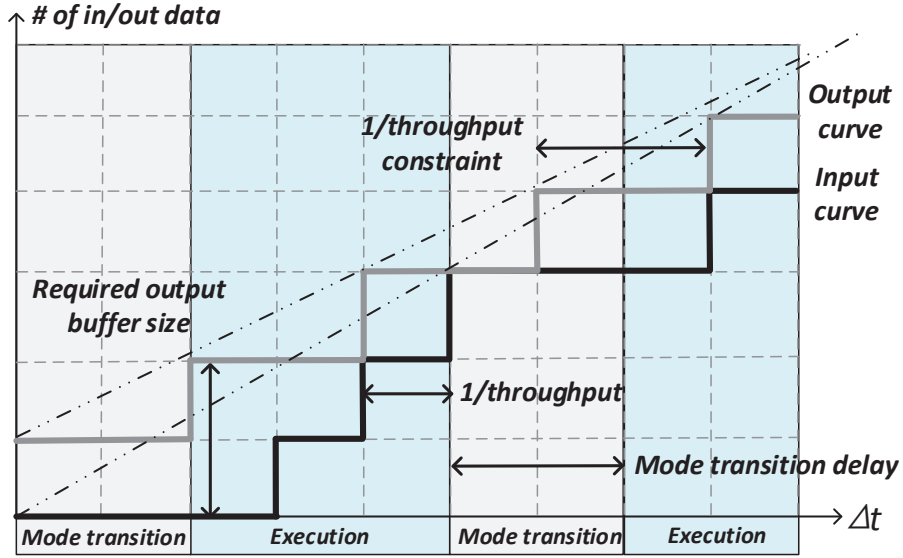}}
\caption{Arrival curves for input and the output streams in the output buffer}
\label{Figure:Arribal Curves}
\end{figure}

In Figure \ref{Figure:Motivational Example of Mode transition} (a), task $Display$ dequeues data from the output buffer periodically with satisfying the throughput constraint, which is depicted as the output curve (gray solid line) in Figure \ref{Figure:Arribal Curves}. The black solid line represents the minimum arrival curve of the input stream which presents the number of generated samples to the output buffer. The buffer size is computed based on the minimum repetition count ($MRC$), the inverse of the throughput, and the maximum mode transition delay among all possible transition scenarios to the mode. In each mode, we compute the buffer size and then choose the maximum buffer size in all modes.

The maximum mode transition delay to mode $m$ is computed as following: 

\begin{definition}[Worst-case Mode Transition Delay to mode m]
\label{Definition:Worst Case Mode Transition Delay}
\begin{gather*}
MaxTransDelay(m) = \max_{\substack{\forall (cm, nm) \\ \in Trans, \\ nm = m}} TransDelay(cm,n m) \:\: 
\end{gather*}
\end{definition}

Note that since the slope of the curve depends on the mode transition delay, the mode repetition count, and the throughput performance of the MMDF schedule, the buffer size is determined after constructing an MMDF schedule meeting the throughput constraints in all modes. 

From the arrival curves, we obtain the minimum output buffer size which is the maximum difference between the curves in every time interval ($\Delta t$). If the overall throughput constraint is satisfied, the output buffer size is computed as following.

\begin{theorem} [Output Buffer Size] The minimum size of the output buffer to satisfy the overall throughput constraint is decided by the following equation:
\label{Theorem:Output Buffer Size}
\begin{gather*}
Output \: buffer \: size = \lceil MaxInterval_{overall} \times ThrConst \rceil
\\ where \:\:MaxInterval_{overall} = \max_{\substack{\forall (cm, nm) \\ \in Trans}} TransDelay(cm, nm) + InitiationInterval(nm) \nonumber
\end{gather*}
\end{theorem}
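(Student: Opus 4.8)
The plan is to reduce the statement to the arrival-curve comparison already set up in this section (the formulation recalled from \cite{Thiele:2000}), and then to evaluate the two relevant curves explicitly. Recall that, with a pre-filled output buffer, the \emph{Display} task never underflows exactly when the initial fill is at least the maximum, taken over every time window of length $\Delta t$, of (the number of samples the consumer may read in that window) minus (the number of samples the producer is guaranteed to write in that window); that maximum difference is the minimum buffer size. Writing $\alpha_{out}(\Delta t)$ for the maximum output (consumer) arrival curve and $\alpha_{in}(\Delta t)$ for the minimum input (producer) arrival curve, the claim becomes $\max_{\Delta t\ge 0}\bigl(\alpha_{out}(\Delta t)-\alpha_{in}(\Delta t)\bigr)=\lceil MaxInterval_{overall}\cdot ThrConst\rceil$. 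The first step is to pin down $\alpha_{out}$: since \emph{Display} dequeues strictly periodically with period $1/ThrConst$, any window of length $\Delta t$ contains at most $\lfloor \Delta t\cdot ThrConst\rfloor+1$ read instants, so $\alpha_{out}(\Delta t)=\lfloor \Delta t\cdot ThrConst\rfloor+1$.

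Next I would lower-bound $\alpha_{in}$. The producer emits (at least) one output sample per iteration; the interval between two consecutive emissions is $InitiationInterval(m)$ while the schedule stays in mode $m$, and across a transition $(cm,nm)$ it equals the gap between the last emission of $cm$ and the first emission of $nm$, which by Definition~\ref{Definition:Mode Transition Delay} (and the timeline of Figure~\ref{Figure:Mode Transition Delay}) is $MigCost(cm,nm)+Latency(nm)=TransDelay(cm,nm)+InitiationInterval(nm)$. Because $Latency(nm)\ge InitiationInterval(nm)$ for any valid pipelined schedule and $MigCost\ge 0$, every within-mode gap is itself dominated by the transition gap of some incoming edge, so the largest possible inter-emission interval over the entire execution is exactly $MaxInterval_{overall}$. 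Hence a window shorter than $MaxInterval_{overall}$ may contain no emission, every longer window contains at least one, and --- using the hypothesis that the \emph{overall} throughput constraint holds --- sufficiently long windows contain at least $\Delta t\cdot ThrConst$ emissions up to a bounded correction; informally, $\alpha_{in}$ sits at $0$ up to $MaxInterval_{overall}$ and then rises with average slope at least $ThrConst$.

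Then I would maximize $\alpha_{out}-\alpha_{in}$. On $0\le\Delta t\le MaxInterval_{overall}$ the lower envelope of $\alpha_{in}$ is $0$, so the difference can be as large as $\alpha_{out}(MaxInterval_{overall})=\lfloor MaxInterval_{overall}\cdot ThrConst\rfloor+1$, and a phase-alignment argument between the periodic consumer and the endpoints of the worst-case production gap identifies this with $\lceil MaxInterval_{overall}\cdot ThrConst\rceil$ (the integer-valued corner case being absorbed by the ceiling). For $\Delta t>MaxInterval_{overall}$ I would show the difference cannot grow further: since each mode is visited for at least $MRC(m)$ iterations and is scheduled to meet a per-mode throughput requirement tightened above $ThrConst$, the buffer is fully replenished between consecutive transitions, so a window straddling two deficit-inducing transitions also contains the intervening surplus, which cancels the earlier deficit. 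Thus the supremum is attained at $\Delta t=MaxInterval_{overall}$; rounding up (buffer sizes are integral) gives the stated value, and the matching lower bound --- necessity --- is precisely that same $\Delta t=MaxInterval_{overall}$ instance, since an adversarial phase of the consumer drains $\lceil MaxInterval_{overall}\cdot ThrConst\rceil$ samples during the worst transition.

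The hard part will be the $\Delta t>MaxInterval_{overall}$ case, i.e.\ proving that transition-induced deficits do not accumulate across several transitions. This is exactly where ``the overall throughput constraint is satisfied'' must be invoked, and making it rigorous means turning the throughput-requirement computation that underlies that hypothesis into a quantitative statement relating $MRC(m)$, $InitiationInterval(m)$, the tightened per-mode requirement, and $MaxTransDelay(m)$ (Definition~\ref{Definition:Worst Case Mode Transition Delay}), and then showing the surplus accrued during one mode visit is at least the deficit incurred by the following transition. A secondary, purely mechanical nuisance is the floor/ceiling bookkeeping caused by the phase of the periodic consumer relative to the worst-gap endpoints, which is why the result is stated with $\lceil\cdot\rceil$ rather than an exact equality.
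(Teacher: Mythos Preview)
Your approach is essentially the same as the paper's: both argue that the buffer size equals the maximum vertical gap between the output and input arrival curves, and that this maximum is attained just before the first input jump (at $\Delta t = MaxInterval_{overall}$) because the input curve's average slope is at least $ThrConst$ under the throughput-constraint hypothesis. The paper's proof is considerably terser than yours---it simply asserts that the slope comparison makes the location of the maximum ``evident'' and reads off the value there, without the explicit $\alpha_{out}/\alpha_{in}$ computation, the inter-emission-gap analysis, or the non-accumulation argument you (correctly) flag as the hard part.
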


\begin{proof}
The buffer size is determined by the maximum distance between the input and the output curves, which is illustrated in Figure \ref{Figure:Arribal Curves}. It is evident that the maximum distance between two curves occurs during the first period of the input curve since the tangential slope of the input curve cannot be smaller than that of the output curve. Then the maximum distance is obtained just before the first jump of the input curve. Therefore,
\begin{equation}
\begin{split}
Output \: buffer \: size = \lceil MaxInterval_{overall} \div \frac{1}{ThrConst} \rceil \\ = \lceil MaxInterval_{overall} \times ThrConst \rceil \nonumber
\end{split}
\end{equation}
\end{proof}

\subsection{Throughput Requirement Analysis}
\label{SubSection:Throughput Requirement Analysis}

As discussed above, the throughput requirement at the next iteration depends on the mode transition delay and the minimum repetition counts of the mode. For conservative estimation, the input curve should be steeper than the output curve in all modes. The throughput requirement in each mode can be formulated as follows:

\begin{theorem} [Throughput Requirement] The throughput requirement in mode $m$
denoted as ThrRequire(m), is formulated as following:

\label{Theorem:Throughput Requirement}
\begin{gather*}
ThrRequire(m) = \frac{ThrConst \times MRC(m)}{MRC(m) - (MaxTransDelay(m) \times ThrConst)}  \nonumber
\end{gather*}
\end{theorem}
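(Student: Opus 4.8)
The plan is to reason, exactly as in the proof of Theorem~\ref{Theorem:Output Buffer Size}, about the relative slopes of the input (production) and output (consumption) arrival curves of the output buffer, but now over one full \emph{visit} to mode $m$ rather than over the first period alone. The output curve is the straight line of slope $ThrConst$. Within one visit to mode $m$ the input curve is a flat segment followed by a rising segment: it is flat for the duration of the entering mode transition and then climbs as the $MRC(m)$ iterations of the mode's SDF schedule produce their output samples. In the worst case the flat part has length $MaxTransDelay(m)$ (Definition~\ref{Definition:Worst Case Mode Transition Delay}), and, with the latency/pipeline bookkeeping absorbed into Definition~\ref{Definition:Mode Transition Delay}, the visit produces $MRC(m)$ output samples over an elapsed time of $MaxTransDelay(m) + MRC(m)\times InitiationInterval(m)$, since consecutive in-mode iterations are spaced by the initiation interval and, by the remark following Definition~\ref{Definition:Mode Transition Delay}, the in-mode throughput is $1/InitiationInterval(m)$.

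First I would argue that it is \emph{sufficient} to require that, over each such worst-case visit, the average slope (chord) of the input curve is at least $ThrConst$. Two observations justify this: a visit to mode $m$ lasts \emph{at least} $MRC(m)$ iterations, so a longer visit merely amortizes the single fixed transition delay over more samples and is therefore never more constraining than the minimal visit of $MRC(m)$ iterations; and, by the same tangent-slope argument used for Theorem~\ref{Theorem:Output Buffer Size}, if every flat-then-climb segment keeps the input chord at slope $\ge ThrConst$, the input curve can never fall behind the output curve by more than its first-period deficit, so the buffer size of Theorem~\ref{Theorem:Output Buffer Size} never underflows. This reduces the claim to a single inequality.

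Next I would write that inequality: the no-underflow condition over a worst-case visit is
\begin{equation*}
\frac{MRC(m)}{MaxTransDelay(m) + MRC(m)\times InitiationInterval(m)} \;\geq\; ThrConst .
\end{equation*}
Interpreting $InitiationInterval(m)$ as the required per-mode value $1/ThrRequire(m)$ and solving for $ThrRequire(m)$ gives
\begin{equation*}
ThrRequire(m) \;\geq\; \frac{ThrConst \times MRC(m)}{MRC(m) - (MaxTransDelay(m) \times ThrConst)} ,
\end{equation*}
and taking equality yields the smallest feasible in-mode throughput, which is the stated formula. I would close by noting the feasibility side-condition $MRC(m) > MaxTransDelay(m)\times ThrConst$ — otherwise no finite in-mode throughput can compensate for a transition delay that already exceeds the budget $MRC(m)/ThrConst$ — which is precisely the condition making the denominator positive.

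The main obstacle I expect is the sufficiency step of the second paragraph: making precise that checking a per-visit chord bound (with the transition delay charged once, and the iteration count pinned to its minimum $MRC(m)$) really dominates the arrival-curve comparison over \emph{all} intervals $\Delta t$, including intervals that straddle several transitions or begin in the middle of a mode. I would discharge this with the standard worst-case-alignment argument for arrival curves: the extremal $\Delta t$ aligns its start with the beginning of a transition, and concatenating visits that each individually satisfy the chord bound cannot accumulate a larger deficit than a single worst visit, so it suffices to bound one visit.
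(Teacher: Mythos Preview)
Your proposal is correct and follows essentially the same approach as the paper: compute the average slope of the input arrival curve over one worst-case visit to mode $m$ as $MRC(m)/(MaxTransDelay(m) + MRC(m)/ThrRequire(m))$, require it to be at least the output slope $ThrConst$, and solve. The paper's proof is in fact just that three-line algebraic manipulation without any of the sufficiency discussion, amortization argument, or feasibility side-condition you supply, so your version is strictly more careful than the original.
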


\begin{proof}
\begin{gather*}
The \: slope \: of \: input \: curve = \frac{MRC(m)}{MaxTransDelay(m) + 1/ThrRequire(m) \times MRC(m)} \\
And \: the \: slope \: of \: output \: curve = \frac{1}{1/ThrConst} = ThrConst \\
Since \: the \: slope \: of \: input \: curve \: should \: be \: steeper \: than \: the \: output \: curve, \\
\frac{MRC(m)}{MaxTransDelay(m) + 1/ThrRequire(m) \times MRC(m)} \geq ThrConst \\
and \:\: ThrRequire(m) \geq \frac{MRC(m)}{MRC(m)/ThrConst - MaxTransDelay(m)}
\end{gather*}
\end{proof}

If the throughput requirement calculated by Theorem \ref{Theorem:Throughput Requirement} is not higher than the throughput in each mode for every MMDF graph, there is a task mapping/scheduling result which satisfies the throughput constraint considering the mode transition delay.

\section{Proposed MMDF Scheduling Framework}
\label{Section:Proposed MMDF Scheduling Framework}

\begin{figure} [ht]
\centerline{\includegraphics[width=0.7\textwidth]{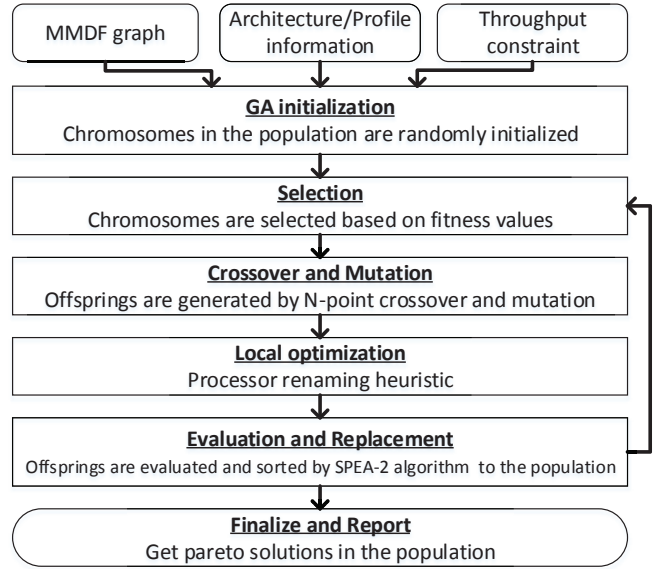}}
\caption{The overall GA framework}
\label{Figure:GA Framework}
\end{figure}

For MMDF scheduling problem, we adopt a genetic algorithm. The overall GA procedure of the proposed framework is shown in Figure \ref{Figure:GA Framework}.

\subsection{GA Configuration}
\label{SubSection:GA Configuration}

\begin{figure} [ht]
\centerline{\includegraphics[width=0.7\textwidth]{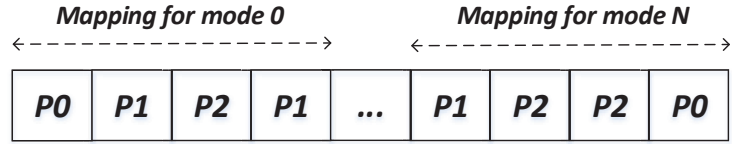}}
\caption{Chromosome structure}
\label{Figure:Chromosome}
\end{figure}

\textbf{\underline{Initialization \& Selection}}: Since a task (or node) can be mapped to different processors in modes, each task is regarded as a unit of mapping in each mode. The chromosome for GA is configured as shown in Figure \ref{Figure:Chromosome}. A chromosome is a set of mapping for each execution mode. Each gene of the chromosome represents to which processor a task in each execution mode is mapped. Chromosomes of initial population are randomly generated and selected for crossover and mutation. The number of selection is a configurable parameter of the GA framework.

\vspace{0.2cm}

\textbf{\underline{Local optimization}}: In order to help the convergence of evolutionary process, a local optimization step is performed before the evaluation step. For local optimization, we devise a processor renaming heuristic that changes the processor id in each mode to reduce the migration cost. The details will be explained later.

\vspace{0.2cm}

\textbf{\underline{Evaluation \& Replacement}}: In this step, we apply a list scheduling heuristic to find a static task schedule in each mode, based on the mapping information given by each chromosome. Once we construct a static schedule, we evaluate the fitness value of each offspring and check whether the throughput constraint is satisfied or not. The fitness function will be described in the next section. Chromosomes in the population are sorted by their fitness values and poor chromosomes are eliminated.

\subsection{Fitness Function}
\label{SubSection:Fitness Function}

The objective of the MMDF scheduling is to minimize the number of processors. The required number of processors is defined as the maximum number of processors in all modes.

\begin{definition}[The Number of Processors for an MMDF Graph]
\label{Definition:The number of processors for an MMDF graph}
\begin{gather*}
The \: number \: of \: processor = \max_{m \in Mode} |Proc_m| \\
where \: Proc_m = \lbrace p \in PE \mid Map(m, p) \neq \emptyset \rbrace
\end{gather*}
\end{definition}

Since the large mode transition delay will degrade the throughput performance and more processors are likely to be required to meet the given throughput constraint, the mode transition delay including task migration overhead is considered to evaluate the number of required processors. And, the GA framework also aims to minimize the overall task migration cost as the secondary objective. The reduction of task migration will save energy consumption of the system, and reduce the network traffic in an NOC architecture. Therefore it is very desirable to reduce the total task migration cost (or delay) in an MMDF graph considering all mode transition scenarios; the total task migration cost of an MMDF graph is defined as follows:

\begin{definition}[Total Task Migration Cost]
\label{Definition:Total task migration cost}
\begin{gather*}
MigCost_{total} = \sum_{(cm, nm) \in Trans} MigCost(cm, nm) \\
where \: MigCost(cm, nm) = \sum_{p \in PE} \sum_{\substack{t \in \lbrace Map(nm, p) \\ - Map(cm, p)\rbrace}} MC(t)
\end{gather*}
\end{definition}

We sum up the migration cost of all possible migration scenarios that are defined by the $MTG$. For each transition in the $MTG$, we accumulate the migration cost of all tasks that are mapped to different processors after the mode transition.

\subsection{Local Optimization Technique}
\label{SubSection:Local Optimization Technique}

\begin{figure} [ht]
\centerline{\includegraphics[width=0.9\textwidth]{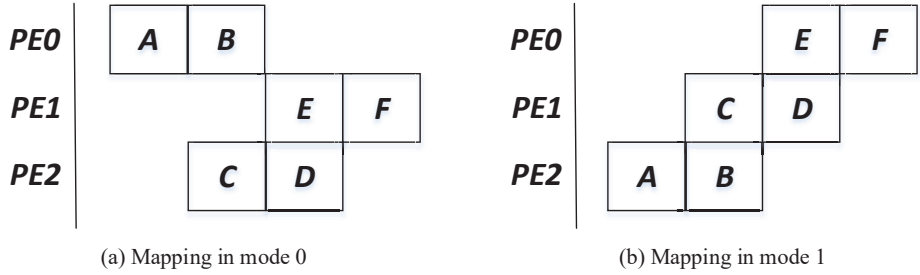}}
\caption{Without processor renaming, every task should be migrated when mode transition occurs. If $PE0$ in mode 0 is renamed to $PE2$ in mode 1, $PE1$ to $PE0$, and $PE2$ to $PE1$, no task migration is required}
\label{Figure:Local Optimization}
\end{figure}

Figure \ref{Figure:Local Optimization} shows a motivational example for local optimization, where two modes have different task mappings defined in the chromosome and a mode transition from mode 0 to mode 1 occurs. In the mapping result, all tasks should be migrated. However, since this paper assumes a homogeneous multiprocessor system, it is possible to rename the processor id in each mode, which is called processor renaming. If $PE0$ in mode 0 is renamed to $PE2$ in mode 1 then tasks A and B do not need to be migrated. Similarly, if $PE1$ in mode 0 is renamed to $PE0$ in mode 1, and $PE2$ to $PE1$ then no task migration is required. Without the processor renaming technique, good solutions such as Figure \ref{Figure:Local Optimization} will be evaluated as poor solutions due to high migration delay, which seriously hinders the convergence of GA.

\begin{algorithm}[t]
\SetAlgoNoLine
\For{all mode transition scenarios}{
  curr $\leftarrow$ mapping information of src mode of the transition\;
  next $\leftarrow$ mapping information of dst mode of the transition\;
  \For{all mapping information of each processor (cId) in curr}{
    \For{all mapping information of each processor (nId) in next}{
      similarity $\leftarrow$ check similarity between curr[cId] \& next[nId]\;
      \If{(similarity $\ge$ maxSimilarity)}{
        maxSimilarity $\leftarrow$ similarity\;
        swapProcId $\leftarrow$ nId\;
      }
    }
    change mapping of next between cId \& swapProcId\;
  }
}
\caption{Processor Renaming Heuristic}
\label{Algorithm:Processor Renaming Heuristic}
\end{algorithm}

The time complexity of the processor renaming algorithm is given as $P^M$ where $P$ denotes the number of processors and $M$ is the number of mode transition scenarios. Therefore, we devise a simple greedy processor renaming heuristic as shown in Algorithm \ref{Algorithm:Processor Renaming Heuristic} to reduce the time complexity. In the proposed heuristic, the time complexity becomes $O(P^2 \times M)$. Note that processor renaming is only applicable for homogeneous processor systems.

The heuristic measures the similarity between processors. The similarity between processors is defined by how many tasks are mapped on both processors in common.

\begin{definition}[Similarity between processors]
\label{Definition:Similarity between processors}
\begin{gather*}
For (cm, nm) \in Trans, \:\: Similarity(p_i, p_j) = |Map(cm, p_i) \cap Map(nm, p_j)|
\end{gather*}
\end{definition}

For each mode transition, processors in the next mode are renamed to the processors in the current mode with the maximum similarity. Even though the proposed heuristic does not consider all possible processor renaming scenarios and does not provide the optimal renaming result, it reduces the time complexity significantly while generating good quality solutions as confirmed by experimental results.

\section{Experimental Results}
\label{Section:Experimental Result}

\begin{figure} [ht]
\centerline{\includegraphics[width=0.93\textwidth]{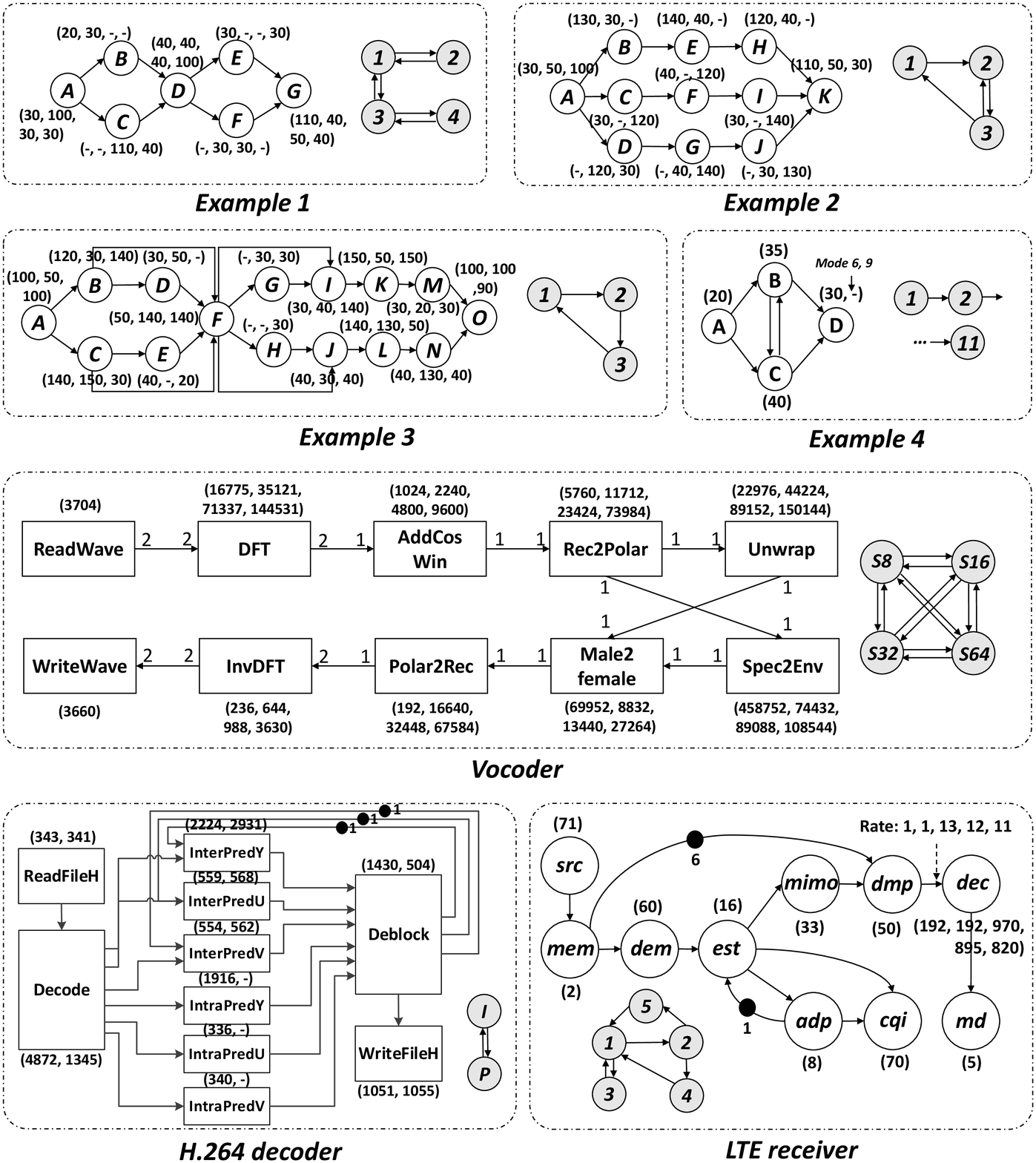}}
\caption{MMDF graph examples used in experiments}
\label{Figure:Experiments Graphs}
\end{figure}

To prove the viability of the proposed framework, we experiment with four synthetic examples and three real applications: H.264 decoder, vocoder \cite{Zhai:2015} and LTE receiver \cite{Siyoum:2011} algorithms as shown in Figure \ref{Figure:Experiments Graphs}. All experiments have been performed on Intel Core i7-4790K 4.00GHz machine with 8GB main memory. Internal parameters of the GA framework are set as shown in Table \ref{Table:GA Configuration}. $\mu$ and $\lambda$ denote the number of parents and offspring.

\begin{table}[ht]
\centering
\tbl{Configuration of the GA framework \label{Table:GA Configuration}}{
\renewcommand{\arraystretch}{1.3}
\small
\begin{tabular}{|
>{\columncolor[HTML]{EFEFEF}}l |l|}
\hline
Population size                     & 100   \\ \hline
$\mu$ and $\lambda$                 & 100   \\ \hline
Probabilities of crossover/mutation & 0.9   \\ \hline
Maximum generations                 & 30000 \\ \hline
\end{tabular}}
\end{table}

\subsection{MMDF Scheduling Technique}
\label{SubSection:MMDF Scheduling Technique}

We compare the proposed technique with two different heuristics. The first heuristic schedules SDF graphs independently and performs the processor renaming heuristic. We denote this technique as \textit{Base}. The \textit{Base} technique is an iterative algorithm. First, for each mode, it constructs pareto-optimal solutions which are optimized with throughput and the number of processors using a genetic algorithm. Then it selects a initial schedule which satisfies the throughput constraint with the minimum number of processors for each mode. Based on the mapping/scheduling results, it performs the processor renaming heuristic and adjusts the throughput requirement as discussed in the previous section, considering the mode transition delay incurred by the initial schedules. If a schedule does not satisfy the calculated throughput requirement, it is replaced with another schedule which uses one more processor. Unless all scheduling results satisfy the new adjusted throughput requirement in all modes, it repeats the mapping/scheduling with the new adjusted throughput requirement until the mapping/scheduling results satisfy the adjusted throughput requirement.

The second approach fixes task mapping in all modes disallowing task migration as the existing approaches usually assume. This technique is denoted as \textit{Fixed}. The \textit{Fixed} technique is implemented in the same GA framework as the proposed framework with disallowing task migration only.

Each MMDF graph consists of a task graph and the associated \textit{MTG}. For all graphs in Figure \ref{Figure:Experiments Graphs} exclude the vocoder application, $Rate(p, mode)$ for each port $p \in P_t$ is one except the input port of node \textit{dec} in the LTE receiver application. For the vocoder application, port rates are fixed among all modes and specified in the figure. For the task graph of vocoder application in \cite{Zhai:2015}, we reduce the number of invocations for specific tasks (from \textit{AddCosWin} to \textit{Polar2Rec}) from 128 to 2 by clustering, so the given WCETs of those tasks in \cite{Zhai:2015} are multiplied by 64. Also, we allow that each instance of the same node can be mapped onto different processors for data parallelism.

The numbers above or under the tasks in Figure \ref{Figure:Experiments Graphs} indicate the $WCET(t_m)$ in each mode. In case that the $WCET(t_m)$ of a task is constant in all modes, a single number is denoted. For synthetic examples, the WCET of each task is set to an arbitrary value, and the WCET of each task in the H.264 decoder application is set to profiled data with $us$ unit. Also, the WCET of each task in the vocoder and LTE receiver applications is set to a value given in \cite{Zhai:2015} and \cite{Siyoum:2011}.

\begin{table}[ht]
\centering
\tbl{Configurations for experiments \label{Table:Graph Configuration}}{
\renewcommand{\arraystretch}{1.3}
\small
\begin{tabular}{|
>{\columncolor[HTML]{EFEFEF}}  c| x{4.5cm} | x{4.5cm} |}
\hline
               & \cellcolor[HTML]{EFEFEF} $MRC(m)$                       & \cellcolor[HTML]{EFEFEF} $ThrConst$                       \\ \hline
\textit{Example 1}      & $\forall m \in Mode,\:MRC(m) = 5$              & 1/150 iteration/time-unit                                 \\ \hline
\textit{Example 2}      & $\forall m \in Mode,\:MRC(m) = 5$              & 1/260 iteration/time-unit                                 \\ \hline
\textit{Example 3}      & $\forall m \in Mode,\:MRC(m) = 5$              & 1/330 iteration/time-unit                                 \\ \hline
\textit{Example 4}      & $\forall m \in Mode,\:MRC(m) = 5$              & 1/80 iteration/time-unit                                  \\ \hline
\textit{Vocoder}        & $\forall m \in Mode,\:MRC(m) = 5$              & 1/500000 iteration/cycle                                  \\ \hline
\textit{H.264 decoder}  & $MRC(I) = 1, MRC(P) = 5$                       & 1/12500 iteration/us (80 fps)                             \\ \hline
\textit{LTE receiver}   & $\forall m \in Mode,\:MRC(m) = 5$              & 1/1800 sub-frames/time-unit                               \\ \hline
\end{tabular}}
\end{table}

For all configurations in Table \ref{Table:Graph Configuration}, we compare the following three techniques: \textit{Base}, \textit{Fixed}, and \textit{Proposed}. We assume that the minimum repetition count ($MRC$) for all modes in each example is set to 5 except the H.264 decoder application, since the mode transition pattern of the H.264 decoder is known and fixed (eg. \textit{I-P-P-P-P-P-I-P-P-P-...}). Throughput constraints are set arbitrarily with considering the WCET of tasks.

\begin{figure} [ht]
\centerline{\includegraphics[width=1\textwidth]{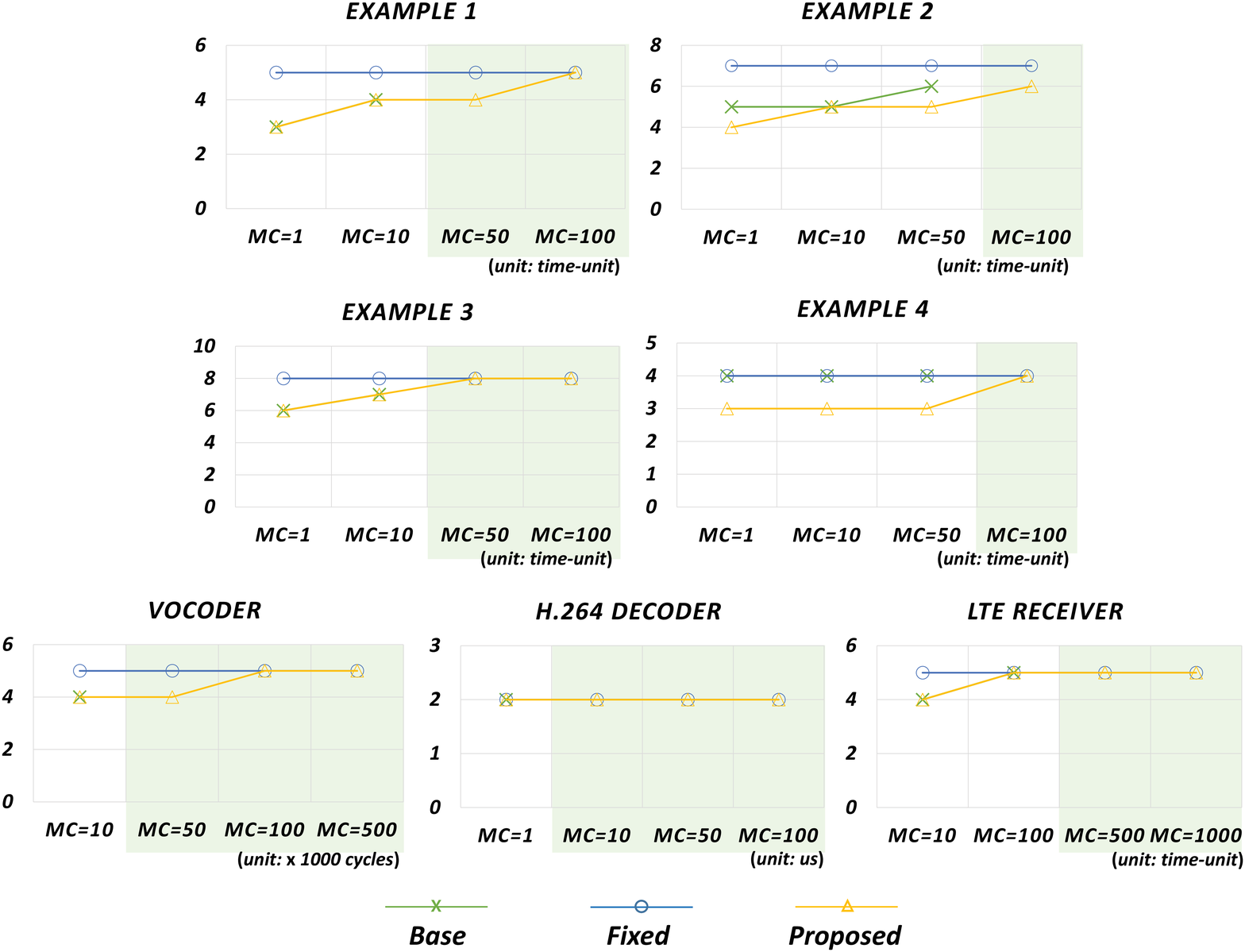}}
\caption{Comparison results in terms of the number of processors: \textit{Base}, \textit{Fixed} and \textit{Proposed}}
\label{Figure:Comparison Result}
\end{figure}

Figure \ref{Figure:Comparison Result} shows the experimental results for all applications. The y-axis indicates the number of required processors and the x-axis presents migration cost ($MC(t)$). To show the viability of the proposed framework, we examine how the mode transition delay takes effect on the number of processors. For each example, we vary the task migration cost or $MC(t)$ using four different values. In the synthetic examples and the vocoder/LTE receiver applications, the task migration cost is fixed as $MC(t)$ for all tasks. In H.264 decoder applications, however, $MC(t)$ is scaled based on the actual task code size for all $t \in T$: the task migration cost of a task is computed as the product of $MC(t)$ values in x-axis and its task code size.

The results show that the \textit{Proposed} method requires no more processors than \textit{Base} and \textit{Fixed} approaches. The \textit{Fixed} approach requires more processors in most cases than \textit{Base} and \textit{Proposed} approaches. Since the \textit{Fixed} approach does not allow task migration, the number of required processors to meet the given throughput constraint is independent of the task migration cost. Since the mode transition delay is determined by not only the task migration delay but also the latency in the \textit{blocking} scheme of task migration, when $MC(t)$ is small, the mode transition delay is mostly dependent on the latency. Therefore, the \textit{Base} approach shows the similar results to the \textit{Proposed} approach for small migration costs. However, as $MC(t)$ increases, the \textit{Base} approach requires more processors than the \textit{Proposed} approach or could not find a feasible solution for large $MC(t)$ values in cases which are highlighted with a green box in Figure \ref{Figure:Comparison Result}.

In the H.264 decoder application, there exists a dominant mode in which all tasks in an MMDF graph are executed. Since the dominant mode creates a critical path in all modes, if a mapping and scheduling result satisfies the throughput constraint in the dominant mode then results in the other modes automatically satisfy the throughput constraint. Hence, no task migration is required and \textit{Fixed} and \textit{Proposed} approaches produce the same results for the application.

\begin{table}[ht]
\centering
\tbl{Experimental results of \textit{Proposed} in case of $MC(t)$ = 10 \label{Table:Detail Result}}{
\renewcommand{\arraystretch}{1.3}
\small
\begin{tabular}{|
>{\columncolor[HTML]{EFEFEF}}c |c|c|c|c|}
\hline
                        & \cellcolor[HTML]{EFEFEF} \makecell{$\max_{m \in Mode}$ \\ $MaxTransDelay(m)$} & \cellcolor[HTML]{EFEFEF} \makecell{$\min_{m \in Mode}$ \\ $ThrRequire(m)$} & \cellcolor[HTML]{EFEFEF} \makecell{$Output$ \\ $buffer\:size$} & \cellcolor[HTML]{EFEFEF} $MigCost_{total}$ \\ \hline
\textit{Example 1}      & 150 time-unit                      & 1/120                            & 2                             & 20 time-unit                                         \\ \hline
\textit{Example 2}      & 380 time-unit                      & 1/184                            & 3                             & 20 time-unit                                         \\ \hline
\textit{Example 3}      & 640 time-unit                      & 1/202                            & 3                             & 60 time-unit                                         \\ \hline
\textit{Example 4}      & 70 time-unit                       & 1/66                             & 2                             & 50 time-unit                                         \\ \hline
\textit{Vocoder}        & 280373 cycles                      & 1/443925                         & 2                             & 150000 cycles                                        \\ \hline
\textit{H.264 decoder}  & 746 us                             & 1/12350.8                        & 1                             & 0 time-unit                                          \\ \hline
\textit{LTE receiver}   & 980 time-unit                      & 1/1604                           & 2                             & 30 time-unit                                         \\ \hline
\end{tabular}}
\end{table}

Table \ref{Table:Detail Result} presents the detailed experimental results from the \textit{Proposed} approach in Figure \ref{Figure:Comparison Result} when $MC(t) = 10$ for instance. The table informs that the throughput which an application should satisfy becomes tighter than the given throughput constraint in Table \ref{Table:Graph Configuration} due to the mode transition delay. The table also presents the total task migration cost and the required output buffer sizes for benchmark applications.

\subsection{Scalability of the Proposed Framework}
\label{Scalability of the Proposed Framework}

\begin{figure} [ht]
\centerline{\includegraphics[width=0.7\textwidth]{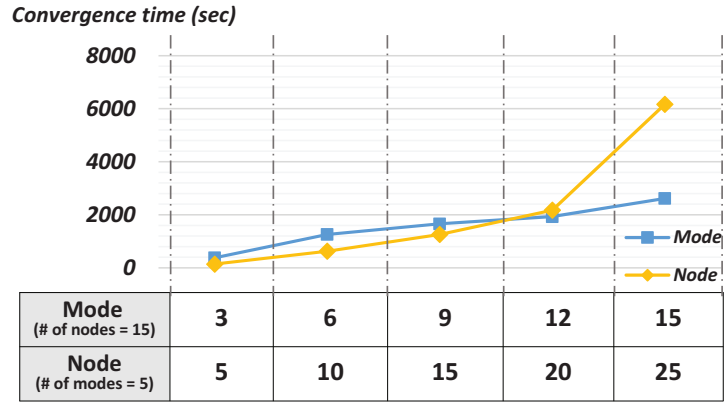}}
\caption{Experimental result for the scalability property}
\label{Figure:Scalability Result}
\end{figure}

Because the proposed framework is based on the multi-objective genetic algorithm, its convergence speed depends on the size of solution space. As shown in Figure \ref{Figure:Chromosome}, the size of the solution space depends on the number of nodes and modes. So, we perform experiments for different configurations of these factors. Figure \ref{Figure:Scalability Result} shows the experimental results of the scalability of the proposed framework for synthetic examples. The results show that the number of nodes more contributes to the convergence speed than the number of modes.

\section{Conclusion}
\label{Section:Conclusion}

In this paper, we address the multiprocessor scheduling problem of a multi-mode dataflow (MMDF) graph allowing task migration with non-negligible mode transition delay. An MMDF graph has a finite set of modes and each mode is specified by an SDF graph. We observe that the mode transition delay should be considered in many streaming applications in which the mode transition occurs frequently, in order to satisfy the throughput constraint. Thus we propose a mapping/scheduling framework based on a genetic algorithm which schedules all SDF graphs simultaneously to minimize the number of processors while keeping the throughput constraint. Also, we propose the formulations to compute the required buffer size and the required throughput performance of the MMDF graph to satisfy the given throughput constraint of the system, by estimating the mode transition delay conservatively from an obtained scheduling result under the assumption of \textit{blocking} scheme of task migration.
To show the viability of the proposed technique, we compare the proposed technique with two other approaches with some synthetic examples and three real applications. Experimental results confirm the superiority of the proposed technique over other approaches.

Because the mode transition delay is conservatively calculated with \textit{blocking} scheme of task migration, as a future work, we plan to calculate the throughput requirement with more exact model of mode transition.

\begin{acks}
This research was supported by a grant to Bio-Mimetic Robot Research Center Funded by Defense Acquisition Program Administration, and by Agency for Defense Development (UD130070ID), Basic Science Research Program through the National Research Foundation of Korea (NRF) funded by the Ministry of Science, ICT \& Future Planning (NRF-2013R1A2A2A01067907, 2013R1A1A1013384), and IT R\&D program MKE/KEIT (No. 10041608, Embedded system Software for New-memory based Smart Device).
\end{acks}

\bibliographystyle{ACM-Reference-Format-Journals}
\bibliography{IDEA_2016_jhw-bibfile.bib}

\medskip

\end{document}